\newcommand{\TReclong}{\textsc{Temporal Recolouring}}
\newcommand{\TRec}{\textsc{TRec}}
\newcommand{\problemdef}[3]{
	\begin{center}
	\begin{minipage}{0.95\textwidth}
		\noindent
		#1
		\vspace{5pt}\\
		\setlength{\tabcolsep}{3pt}
		\begin{tabularx}{\textwidth}{@{}lX@{}}
			\textbf{Input:}     & #2 \\
			\textbf{Task:}  & #3
		\end{tabularx}
	\end{minipage}
	\end{center}
}
\newcommand{\decproblemdef}[3]{
	\begin{center}
	\begin{minipage}{0.95\textwidth}
		\noindent
		#1
		\vspace{5pt}\\
		\setlength{\tabcolsep}{3pt}
		\begin{tabularx}{\textwidth}{@{}lX@{}}
			\textbf{Input:}     & #2 \\
			\textbf{Question:}  & #3
		\end{tabularx}
	\end{minipage}
	\end{center}
}
\DeclareMathOperator{\Components}{Components}
\DeclareMathOperator{\Cost}{Cost}
\newtheorem{observation}{Observation}
\begin{document}
\bibliographystyle{plainurl}
\title{Maintaining Bipartite Colourings on Temporal Graphs on a Budget}
%
%
\author{Duncan Adamson\inst{1}\orcidID{0000-0003-3343-2435} \and
 George B. Mertzios\inst{2}\orcidID{0000-0001-7182-585X} \and
 Paul G. Spirakis\inst{3}\orcidID{0000-0001-5396-3749}}
%




\authorrunning{Duncan Adamson, George B. Mertzios, and Paul G. Spirakis} 


%
\institute{School of Computer Science, University of St Andrews, UK
\email{duncan.adamson@st-andrews.ac.uk}\and
Department of Computer Science, University of Durham, UK\\
\email{george.mertzios@durham.ac.uk} \and
Department of Computer Science, University of Liverpool, UK\\
\email{p.spirakis@liverpool.ac.uk}
}
\maketitle              
\begin{abstract}
Graph colouring is a fundamental problem for networks, serving as a tool for avoiding conflicts via symmetry breaking, for example, avoiding multiple computer processes simultaneously updating the same resource. This paper considers a generalisation of this problem to \emph{temporal graphs}, i.e., to graphs whose structure changes according to an ordered sequence of edge sets. 
In the simultaneous resource updating problem on temporal graphs, the resources which can be accessed will change, however, the necessity of symmetry breaking to avoid conflicts remains.

In this paper, we focus on the problem of \emph{maintaining proper colourings} on temporal graphs in general, with a particular focus on bipartite colourings. Our aim is to minimise the total number of times that the vertices change colour, or, in the form of a decision problem, whether we can maintain a proper colouring by allowing not more colour changes than some given \emph{budget}. On the negative side, we show that, despite bipartite colouring being easy on static graphs, the problem of maintaining such a colouring on graphs that are bipartite in each snapshot is NP-Hard to even approximate within \emph{any} constant factor unless the Unique Games Conjecture fails. On the positive side, we provide an exact algorithm for a temporal graph with $n$ vertices, a lifetime $T$ and at most $k$ components in any given snapshot in $O(T \vert E \vert 2^{k} + n T 2^{2k})$ time, and an $O\left(\sqrt{\log(nT)}\right)$-factor approximation algorithm running in $\tilde{O}((nT)^3)$ time.

Our results contribute to the structural complexity of networks that change with time with respect to a fundamental computational problem.

\keywords{Temporal Graph \and Graph Colouring \and Bipartite Graphs.}
\end{abstract}
\newpage
\section{Introduction}

Graph colouring is one of the most fundamental questions in graph theory. Informally, this problem asks for an assignment of colours from a palette, normally represented by a set of integers $1, 2, \dots, C$, such that no two adjacent vertices share a colour. This has many applications, particularly in networks where coordination is needed between various agents to avoid conflicts, for example, avoiding system resources being updated simultaneously. Given the broad applicability of this problem, there is a large body of work for both centralised \cite{formanowicz2012survey,matula1972graph} and distributed \cite{fuchs2025distributed,ghaffari2022deterministic,HalldorssonKNT22} computing. See \cite{lewis2021guide} for an overview of several applications.

In this paper, we introduce and study a new generalisation of this problem to \emph{temporal graphs}, graphs formed by a sequence of \emph{snapshots}, with changing edgesets over a constant set of vertices. In our variant, the goal to to construct a sequence of colourings such that each snapshot is properly coloured, while minimising the number of changes between snapshots. This allows us to capture the problems of avoiding conflicts between neighbouring vertices, while also incorporating the changing nature of temporal graphs. To date, there has been a reasonable body of work on colouring temporal graphs \cite{adamson2025harmonious,barba2017dynamic,ibiapina2025color,marino2022coloring,mertzios2021sliding,yu2013algorithms}, with the primary, but by no means exclusive, focus being on \emph{sliding window colourings}, colourings where each edge is properly coloured at least once within each window.

In this paper, we focus primarily on \emph{bipartite graphs}, graphs which can be coloured with a palette of size two, traditionally $\{0, 1\}$. Our primary reason for this focus is that, even for a palette of size three, this problem is trivially NP-hard, from the general problem of 3-colouring graphs. On the other hand, finding a bipartite colouring can be done in time linear relative to the number of edges, making this setting far more interesting from a technical standpoint.

We mention some of the other results on generalisations of the colouring problem to temporal graphs. First is the work by Yu et al. \cite{yu2013algorithms}, who study a similar definition to ours, with the objective of minimising the sum $C + \alpha A$ where $C$ is the palette size, $A$ is total number of changes, and $\alpha$ is some parameter chosen by the user. The authors study six colouring algorithms, providing both an experimental comparison and explicit classes in which the algorithms perform best. We note that, unlike this work, they do not provide constraints on the size of the palette, and while their formulation remains NP-hard, the reduction follows explicitly from the hardness of the colouring problem in general, without any reference to bipartite graphs. More recent is the work by Mertzios et al. \cite{mertzios2021sliding} on sliding window colourings, who showed that the problem of finding such a problem is NP-hard even in several restricted classes, while also providing strong FPT results for finding such colourings in general. This work was built upon by Marino and Silva \cite{marino2022coloring}, who study parameters under which such a colouring can be found in polynomial time, specifically relating to how frequent, and for how long, each edge is active.

\paragraph*{Our Contribution.}

We show that this problem restricted to bipartite graphs occupies an interesting position in terms of complexity. Determining whether there exists a colouring that does not require any vertex to be recoloured in any snapshot can be done in linear time, however, determining the minimum number of changes needed for any non-trivial instance is NP-hard. On the positive side, for a temporal graph with $n$ vertices and a lifetime $T$, we present an algorithm that is a fixed-parameter tractable in the maximum number $k$ of connected components in each snapshot, running in $O(T \vert E \vert 2^{k}  + n T 2^{2k})$ time, and an approximation algorithm with a factor of $O\left(\sqrt{\log(nT)}\right)$ running in $\tilde{O}((n T)^3)$ time. To complete the picture, we prove that, for every $C\geq 3$, it is NP-hard even to determine whether there exists a $C$-colouring with recolouring cost zero, i.e., a $C$-colouring that does not require any vertex to be recoloured in any snapshot.

\section{Preliminaries}

We denote by $[i]$ the ordered sequence of integers $1, 2, \dots, i$, and by $[i, j]$ the sequence $i, i + 1, \dots, j$, for any $i, j \in \mathbb{Z}$, $i \leq j$. We define a \emph{temporal graph} by an ordered sequence of \emph{snapshots}, each of which is a graph over a common set of vertices. Formally, let $\mathcal{G}$ be a temporal graph, then $\mathcal{G} = G_1, G_2, \dots, G_T$ where $G_t = (V, E_t)$. We call the number of snapshots, by convention denoted $T$, the \emph{lifetime} of the temporal graph, and $G_t$ the $t^{\text{th}}$ \emph{snapshot} of the temporal graph.
Where confusion may arise, we refer to non-temporal graphs as \emph{static graphs}. The \emph{underlying graph} of a temporal graph, $\mathcal{G}$, denoted $U(\mathcal{G})$, is the static graph formed by the union of all edge sets in the temporal graph, i.e., $U(\mathcal{G}) = \left(V, \bigcup_{t \in [T]} E_t\right)$.

Given a vertex $v \in V$ in a temporal graph $\mathcal{G}$, we denote by $N_t(v)$ the set of \emph{neighbours} of $v$ in the snapshot $G_t = (V, E_t)$, formally $N_t(v) = \{u \in V \mid (v, u) \in E_t\}$. The \emph{degree} of a vertex $v$ in the snapshot $G_t$, denoted $\deg_t(v)$, is the number of neighbours of $v$ in $G_t$, formally $\deg_t(v) = \vert N_t(v) \vert$. We ommit the subscript denoting snapshot for both the set of neighbours and degree of a vertex in a static graph, thus $N(v)$ denotes the set of neighbours, and $\deg(v) = \vert N(v) \vert$ the degree. Given multiple static graphs, we denote by $N_G(v)$ (resp., $\deg_G(v)$) the set of neighbours (resp. degree) of $v$ in the graph $G$. A \emph{component} in a static graph $G = (V, E) $ is a subgraph $K = (V_K, E_K)$ such that $\forall v \in V_K$, $N(v) \subseteq V_K$ and $E_K = E \cap  (V_K \times V_K)$.

We define a colouring of a graph as a function $\psi: V \mapsto [0, C - 1]$ for some palette size $C$. We call a colouring over a palette of size $C$ a \emph{$C$-colouring}. A $2$-colouring, as studied in this paper, is called \emph{bipartite}, assigning to each vertex a colour in the set $\{0, 1\}$. A colouring of the (static) graph $G = (V, E)$ is \emph{proper} if, given any edge $(v, u) \in E$, we have $\psi(v) \neq \psi(u)$. An edge $(v, u)$ is called \emph{monochrome} if $\psi(v) = \psi(u)$. For the remainder of this paper, we assume by default that any given colouring is proper unless explicitly stated to be otherwise. We denote by $\phi_C(G)$ the set of all proper $C$-colourings of a given static graph~$G$.

We define a \emph{temporal sequence $C$-colouring} of a temporal graph of lifetime~$T$ as a sequence of $T$ proper $C$-colourings, $\Psi = \psi_1, \psi_2, \dots, \psi_T$. We denote by $\Phi_C(\mathcal{G})$ the set of all temporal sequence $C$-colourings of the temporal graph $\mathcal{G}$. The \emph{cost} of a temporal sequence colouring, denoted $\Cost(\Psi)$, is the total number of colour changes of each vertex over the lifetime of the graph. Formally,

$$\Cost(\Psi) = \sum_{v \in V} \sum_{t \in [T - 1]} \begin{cases}
    0 & \psi_t(v) = \psi_{t + 1}(v)\\
    1 & \psi_t(v) \neq \psi_{t + 1}(v)
\end{cases}.$$

A \emph{minimum cost temporal sequence $C$-colouring} of a temporal graph $\mathcal{G}$ is a temporal sequence $C$-colouring $\Psi \in \Phi_C(\mathcal{G})$ such that $\Cost(\Psi) \leq \Cost(\Psi')$, $\forall \Psi' \in \Phi_C(\mathcal{G})$.
The primary problem we study in this paper is \TReclong. As a decision problem, we ask whether there exists a temporal sequence $C$-colouring of a given temporal graph $\mathcal{G}$ with a cost of at most some given budget $B \in \mathbb{N}$. In the optimisation version of the problem, we are not given $B$ along with the input, but we rather aim at computing the cost of a minimum cost temporal sequence colouring.


\decproblemdef{\TReclong \ (\TRec)}
{A temporal graph $\mathcal{G}$, a palette of size $C\in\mathbb{N}$, and budget $B\in\mathbb{N}$.}
{Does there exist a temporal sequence $C$-colouring $\Psi : V \times [T] \mapsto [C]$ such that $\Cost(\Psi) \leq B$?}

Furthermore, whenever the size $C$ of the palette in the problem \TRec\ is a constant (and not part of the input), we refer to the problem as $C$-\TRec.


\subsection{Non-Bipartite Temporal Sequence Colourings}

Before presenting our main results, we make a small number of observations on some general properties of temporal sequence colourings for non-bipartite graphs. First, we justify our focus on bipartite colourings with the following:
\begin{observation}
    \TRec\ is NP-Hard for any $C \geq 3$.
\end{observation}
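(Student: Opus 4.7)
The plan is to give a direct polynomial-time reduction from the classical $C$-Colouring problem on static graphs, which is well known to be NP-hard for every $C \geq 3$. Given an instance $G=(V,E)$ of $C$-Colouring, I would build the trivial temporal graph $\mathcal{G}$ of lifetime $T=1$ whose only snapshot is $G_1 = G$, and set the budget $B = 0$.

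The correctness argument is then immediate from the definitions. Since the cost function sums over $t \in [T-1]$, and $[T-1] = [0]$ is empty when $T=1$, every temporal sequence $C$-colouring of $\mathcal{G}$ has $\Cost(\Psi) = 0$ by default. Hence $\Phi_C(\mathcal{G})$ is nonempty (equivalently, there exists a temporal sequence $C$-colouring of cost at most $B = 0$) if and only if the single snapshot $G_1 = G$ admits a proper $C$-colouring. Thus a YES-instance of $C$-TRec on $\mathcal{G}$ corresponds exactly to a YES-instance of $C$-Colouring on $G$, and the reduction is clearly computable in linear time.

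There is essentially no technical obstacle here; the only thing worth checking is that the problem formulation permits the degenerate lifetime $T=1$, which it does since a temporal graph is defined simply as an ordered sequence of snapshots $G_1, \ldots, G_T$ on a common vertex set with no lower bound on $T$. Note additionally that this reduction already establishes the stronger statement advertised in the introduction, namely that for every $C \geq 3$ it is NP-hard to decide even whether there is a $C$-colouring with recolouring cost zero, since the constructed instance uses $B=0$.
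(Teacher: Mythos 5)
Your reduction is correct and is exactly the argument the paper intends: the observation is justified there by the single remark that it ``follows from the hardness of finding a 3-colouring in static graphs,'' which is precisely your lifetime-$T=1$, budget-$B=0$ construction spelled out. (The paper's separate Proposition~1 strengthens the cost-zero claim to snapshots of maximum degree~1 by spreading the edges over $m$ snapshots, which your one-snapshot reduction does not give, but that is not needed for this observation.)
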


Note that this follows from the hardness of finding a 3-colouring in static graphs. We extend to the more complex setting of $\Delta + 1$ colourings, where $\Delta$ represents the maximum degree of any vertex in any snapshot, i.e. $\Delta = \max_{v \in V} \max_{t \in [T]} \deg_t(v)$. Unlike a 3-colouring a $(\Delta + 1)$-colouring can be found in $O(\vert E \vert)$ time for any given static graph $G = (V, E)$, and thus admits a less trivial proof of hardness.

\begin{proposition}
    \label{prop:deg_plus_one}
    For every $C\geq 3$, it is NP-hard to determine whether the minimum cost of a temporal sequence $C$-colouring of a temporal graph $\mathcal{G}$ is zero, even when every vertex has degree at most 1 in every snapshot.
\end{proposition}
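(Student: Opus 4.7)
The plan is to reduce from the (static) $C$-colouring problem, which is NP-hard for every fixed $C \geq 3$. The key observation is that a temporal sequence $C$-colouring of cost $0$ is, by definition, a single proper $C$-colouring $\psi$ that is proper in every snapshot simultaneously; equivalently, $\psi$ is a proper $C$-colouring of the underlying graph $U(\mathcal{G})$. So if we can show that an arbitrary static graph $G$ can be realised as $U(\mathcal{G})$ for some temporal graph $\mathcal{G}$ in which every snapshot has maximum degree $1$, the reduction is immediate.

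To do this, given an instance $G = (V, E)$ of $C$-colouring, I would compute a proper edge colouring of $G$ with at most $\Delta(G) + 1$ colours using the constructive proof of Vizing's theorem, which runs in polynomial time. The colour classes $M_1, M_2, \ldots, M_T$ of this edge colouring are matchings whose union is $E$, so I can define the temporal graph $\mathcal{G} = (G_1, \ldots, G_T)$ by setting $G_t = (V, M_t)$. Each snapshot is a matching, so $\deg_t(v) \leq 1$ for every $v$ and every $t$, and by construction $U(\mathcal{G}) = G$.

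Now I would argue the equivalence: a zero-cost temporal sequence $C$-colouring of $\mathcal{G}$ is a constant sequence $\Psi = (\psi, \psi, \ldots, \psi)$ in which $\psi$ is proper in each $G_t$, hence proper on $\bigcup_t M_t = E$, i.e.\ a proper $C$-colouring of $G$; conversely, any proper $C$-colouring of $G$ repeated $T$ times yields a zero-cost temporal sequence $C$-colouring of $\mathcal{G}$. Since the reduction is polynomial and preserves the answer, deciding whether the minimum cost is zero is NP-hard for every $C \geq 3$, even in this degree-$1$-per-snapshot regime.

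The only step requiring care is the edge-colouring reduction: I need a polynomial-time constructive version of Vizing's theorem (classical, e.g.\ Misra--Gries) to guarantee that the transformation from $G$ to $\mathcal{G}$ runs in polynomial time. Everything else is a straightforward unfolding of definitions, so I do not anticipate a genuine technical obstacle beyond citing this constructive edge-colouring result.
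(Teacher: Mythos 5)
Your proof is correct and follows essentially the same approach as the paper: reduce from static $C$-colouring by distributing the edges of $G$ across snapshots so that each snapshot has maximum degree $1$, and observe that a zero-cost temporal sequence colouring is a single colouring proper on the underlying graph. The only difference is that the paper avoids the Vizing/Misra--Gries machinery entirely by simply placing one edge per snapshot (lifetime $m$ instead of $\Delta+1$), which trivially yields the degree bound.
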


\begin{proof}
    Let $G = (V, E)$ be a static graph, and let $m=|E|$ be the number of its edges and $C\geq 3$ be arbitrary. 
    We construct a temporal graph $\mathcal{G} = G_1, G_2, \dots, G_m$, 
    where every snapshot has exactly one of the edges of $G$. 
    Now, observe that no vertex can have a degree greater than 1 in any given snapshot. 
    Furthermore, note that there exists a temporal sequence $C$-colouring in this graph with a cost of $0$ if and only if there exists a $C$-colouring of $G$. Therefore, since $C\geq 3$, this problem is NP-hard.
    
\vspace{-12.1pt}
\qed\end{proof}

We note that Proposition \ref{prop:deg_plus_one} does not directly apply to bipartite graphs, as we can determine if a zero-cost temporal 2-colouring exists in linear time by determining if the underlying graph is bipartite. Thus, our results in the following sections are directly motivated in closing this bound.

\section{Hardness Results for Bipartite Colourings}

Our first major result is in showing that the \TRec\ is NP-hard and hard to approximate (subject to specific plausible computational complexity assumptions), even when every snapshot only consists of a disjoint union of paths (see~\Cref{thm:alw-bip_hardness}). For the proof of \Cref{thm:alw-bip_hardness}
we present a reduction from the problem \textsc{MinUnCut}, which is the dual problem of \textsc{MaxCut}:

\newcommand{\MUC}{\textsc{MinUnCut}}
\problemdef{\MUC}
{A (static) graph $G=(V,E)$.}
{Partition the vertex set $V$ of $H$ into two colour-classes such that the number of monochromatic egdes in $E$ is as small as possible.}


It is known that \MUC~is NP-hard and that it is known that there does not exist any polynomial-time constant approximation algorithm unless the Unique Games Conjecture fails~\cite{Khot02a}.

\begin{theorem}
    \label{thm:alw-bip_hardness}
    \TRec\ is NP-hard, even when $C=2$, $\mathcal{G}$ consists of a disjoint union of paths at every time step, and the lifetime of $\mathcal{G}$ is 2. Further, there is no polynomial-time constant approximation algorithm unless the Unique Games Conjecture fails.
\end{theorem}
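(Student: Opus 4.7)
The plan is to give a cost-preserving reduction from \MUC\ to \TRec. Since \MUC\ is NP-hard and, by Khot~\cite{Khot02a}, admits no polynomial-time constant-factor approximation under the Unique Games Conjecture, an \emph{exact} preservation of the optimum value will yield both conclusions of the theorem at once. Given an instance $G=(V,E)$ of \MUC, I will build a temporal graph $\mathcal{G} = G_1, G_2$ of lifetime $2$ whose snapshots are both disjoint unions of paths, and such that the minimum cost of a temporal $2$-colouring of $\mathcal{G}$ equals the minimum number of monochromatic edges attainable by a $2$-partition of $V$ in $G$.

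For the construction, I introduce for each $v \in V$ a set of $\deg_G(v)$ \emph{copy} vertices $v^1, \ldots, v^{\deg_G(v)}$ and $\deg_G(v)-1$ \emph{auxiliary} vertices $a_v^1, \ldots, a_v^{\deg_G(v)-1}$, and fix an arbitrary bijection between the edges of $G$ incident to $v$ and the copies of $v$. Snapshot $G_1$ contains, for each edge $e=(u,v)\in E$, the single matching edge joining the copy of $u$ assigned to $e$ to the copy of $v$ assigned to $e$; all auxiliaries are isolated in $G_1$, so $G_1$ is a union of matching edges and isolated vertices. Snapshot $G_2$ consists, for each $v \in V$, of the vertex-disjoint path
$$v^1 - a_v^1 - v^2 - a_v^2 - \cdots - a_v^{\deg_G(v)-1} - v^{\deg_G(v)},$$
so that $G_2$ is a disjoint union of $\vert V \vert$ paths, as required.

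For correctness, in any proper $2$-colouring $\psi_2$ of $G_2$ the single-vertex separator forces consecutive copies of $v$ to sit at equal-parity positions along its path, so all copies of $v$ share a common colour $c(v)\in\{0,1\}$. Fixing such a $c$ and choosing the cheapest extension $\psi_1$, every isolated auxiliary can match its $\psi_2$-colour at zero cost, and every matching edge $(u^i, v^j)$ of $G_1$ contributes zero cost if $c(u) \neq c(v)$ (the edge is already properly coloured by $\psi_2$) but forces exactly one endpoint flip (hence cost $1$) if $c(u) = c(v)$. Summing and minimising over $c$ yields exact equality between the optimum temporal recolouring cost and the minimum \MUC\ value on $G$, so any polynomial-time $\alpha$-approximation for $2$-\TRec\ would transfer to an $\alpha$-approximation for \MUC.

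The main obstacle I anticipate is designing the $G_2$ gadget so that each logical vertex $v\in V$ is represented by a single effective colour across all of its copies while $G_2$ remains a disjoint union of paths; the one-auxiliary separator achieves this cleanly by equalising parities without raising any copy's degree above $2$. Any parity-preserving alternative either collapses into a non-path structure (e.g.\ a star on all copies) or imposes no constraint linking the copies at all, so the single-auxiliary gadget is essentially forced.
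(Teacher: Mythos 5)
Your proposal is correct and is essentially the paper's own reduction from \MUC: the same vertex gadget (a path on $2\deg_G(v)-1$ vertices whose alternating structure forces all copies of $v$ to share a colour) paired with a matching snapshot encoding the edges, yielding an exact cost-preserving correspondence. The only difference is that you place the matching in $G_1$ and the paths in $G_2$, whereas the paper does the reverse; this is immaterial since the recolouring cost is symmetric in time.
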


\paragraph*{Construction.}
From an input instance $H$ of \MUC with $n$ vertices $v_1, \ldots, v_n$ and $m$ edges, we build the temporal graph $\mathcal{G} = G_{1}, G_{2}$, as follows. For every $i=1,\ldots,n$ denote by $d_i$ the number of neighbours $\deg_H(v_i)$ of $v_i$ in the graph $H$. 
At a high level, we represent each vertex by a set of $2d_i - 1$ vertices, connected in $G_{1}$ to form a \emph{vertex gadget}. In $G_{2}$, we replace the vertex gadgets with \emph{edge gadgets}. We represent the colours of the vertices of $H$ by the initial colouring of the vertex gadgets. 
We now provide the explicit construction.

The vertex gadget of each vertex $v_i$ is a path with $2d_i - 1$ vertices. The vertex set of $\mathcal{G}$ consists of these $\sum_{i=1}^{n}(2d_i - 1)=4m-n$ vertices. The vertices of the vertex gadget of $v_i$ are denoted $a_{i}^1, b_{i}^1, \ldots, a_{i}^{d_i-1}, b_{i}^{d_i-1}, a_{i}^{d_i}$, as shown in Figure~\ref{alw-bip-fig}, where the edges of this path are $\{(a_{i}^j, b_i^j), (a_i^j, b_{i}^{j + 1}) \vert j =1,\ldots,d_i-1\}$. Note that for every 2-colouring of this (static) vertex gadget, all vertices $a_{i}^1, \ldots, a_{i}^{d_i}$ are coloured with one colour and all vertices $b_{i}^1, \ldots, b_{i}^{d_i-1}$ are coloured with the other colour. Our first snapshot, $G_{1}$, corresponds exactly to the set of variable gadgets.


\begin{figure}[ht!]
        \centering
        \includegraphics[width=\linewidth]{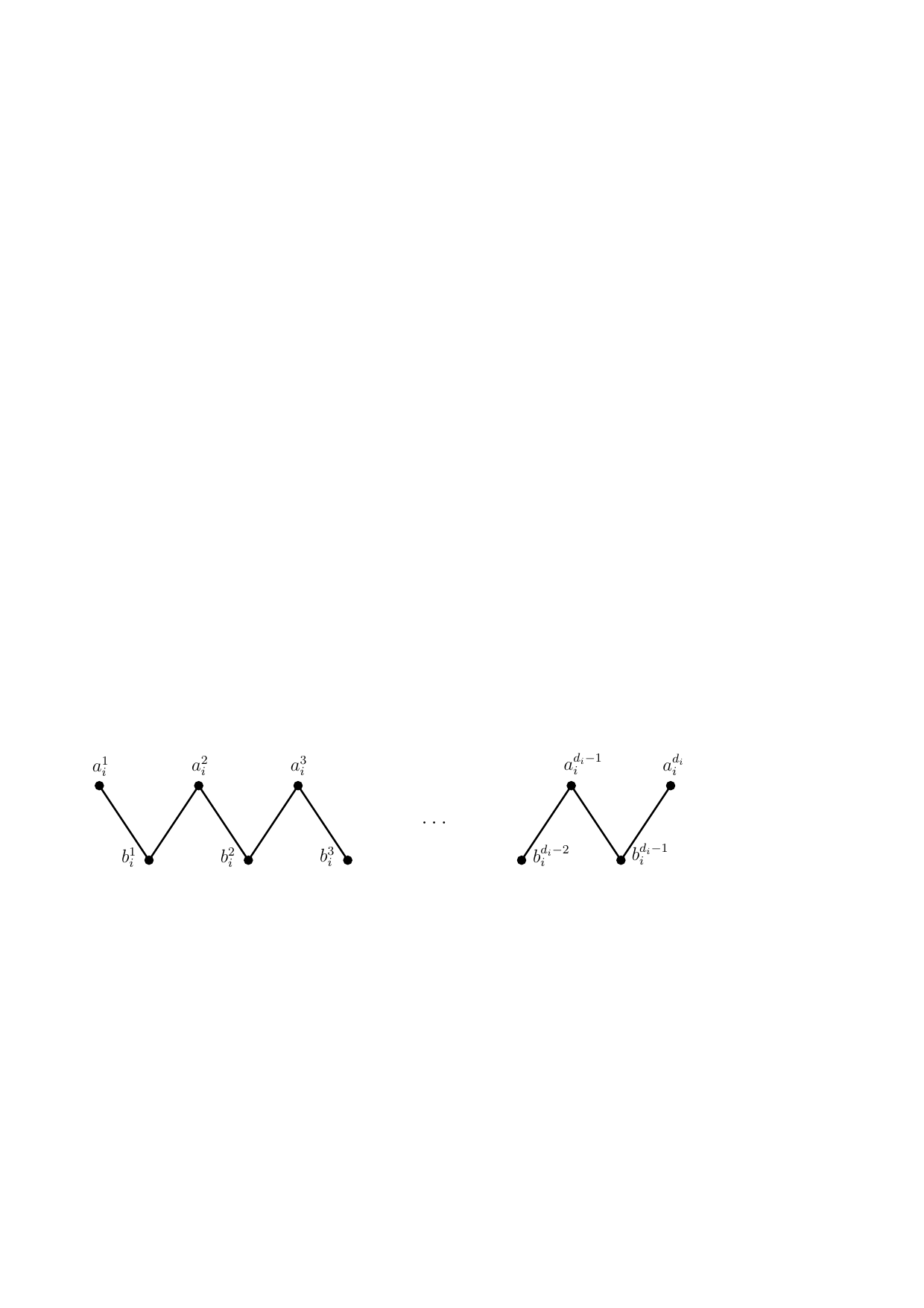}
        \caption{\vspace{0,1cm}}
    
    \caption{The variable gadget for vertex $v_i$.\label{alw-bip-fig}}
\end{figure}

In the second snapshot $G_{2}$, we remove all edges from all variable gadgets and, for each edge of the input graph $H$, we add one edge to $G_2$, as follows. For every vertex $v_i$ in the input graph $H$, we arbitrarily enumerate its edges. Let now $(v_i, v_j) \in E$ be an edge of $H$, which is the $p^{\text{th}}$ edge of vertex $v_i$ and the $q^{\text{th}}$ edge of vertex $v_j$, where $1\leq p \leq d_i$ and $1\leq q \leq d_j$. Then, we add to $G_2$ the edge $(a_i^p, a_j^q)$. That is, the second snapshot $G_2$ is a matching of $m$ edges.
This completes the construction of the temporal graph $\mathcal{G}=(G_1,G_2)$.

\begin{lemma}
    \label{lem:SAT_gadget_is_correct}
    Given an instance $H=(V,E)$ of \MUC~with $m$ clauses, the temporal graph $\mathcal{G}$ admits a $2$-colouring with a cost at most $k$ if and only if there exists a partition of $V$ into two colour classes such that there are at most $k$ monochromatic edges in $E$.
\end{lemma}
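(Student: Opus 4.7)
The plan is to prove both directions of the biconditional by exploiting the fact that, in any proper 2-colouring of a path, the two colour classes are determined (up to a global swap) by the colour of any single endpoint. Because each vertex gadget in $G_1$ is a path that alternates $a$- and $b$-vertices, every proper 2-colouring of $G_1$ assigns one colour to all of $a_i^1,\ldots,a_i^{d_i}$ and the other colour to all of $b_i^1,\ldots,b_i^{d_i-1}$. We identify the ``colour of $v_i$'' with this common colour of its $a$-vertices, which sets up a bijection between vertex partitions of $H$ and proper 2-colourings of $G_1$ (modulo per-component swaps). The crucial structural feature is that $G_2$ is a \emph{matching} whose edges are exactly $(a_i^p,a_j^q)$ for each edge $(v_i,v_j)$ of $H$, so monochromatic edges of $H$ translate directly into monochromatic matching edges of $G_2$ under $\psi_1$.

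For the $(\Leftarrow)$ direction, given a partition $c:V(H)\to\{0,1\}$ with at most $k$ monochromatic edges, define $\psi_1(a_i^j)=c(v_i)$ and $\psi_1(b_i^j)=1-c(v_i)$. This is proper on $G_1$ by construction. Build $\psi_2$ by keeping every $b$-vertex and every $a$-vertex whose matching partner in $G_2$ has the opposite colour, and by flipping, for each monochromatic edge $(v_i,v_j)\in E(H)$ with corresponding matched pair $(a_i^p,a_j^q)$, exactly one of the two endpoints. Since $G_2$ is a matching, these local flips do not interfere, and $\psi_2$ is proper on $G_2$. The $b$-vertices are isolated in $G_2$ and contribute zero cost, so the total cost equals the number of monochromatic edges, which is at most $k$.

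For the $(\Rightarrow)$ direction, take any temporal 2-colouring $(\psi_1,\psi_2)$ of $\mathcal{G}$ of cost at most $k$. Since $\psi_1$ is a proper 2-colouring of the path gadget of $v_i$, all $a$-vertices of that gadget share a colour; set $c(v_i):=\psi_1(a_i^1)$. For any edge $(v_i,v_j)\in E(H)$ that is monochromatic under $c$, the corresponding edge $(a_i^p,a_j^q)$ of $G_2$ has $\psi_1(a_i^p)=\psi_1(a_j^q)$, so properness of $\psi_2$ on $G_2$ forces at least one of these two endpoints to change colour between $\psi_1$ and $\psi_2$. Because $G_2$ is a matching, each vertex lies in at most one edge of $G_2$, so the sets of ``witness'' vertices for distinct monochromatic edges are disjoint. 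Summing gives $\Cost(\psi_1,\psi_2)\geq |\{\text{monochromatic edges under }c\}|$, hence $c$ is a partition of $V(H)$ with at most $k$ monochromatic edges.

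The main obstacle is conceptual rather than technical: one must justify that the cost lower bound really does decompose additively over monochromatic edges. This is precisely where the matching structure of $G_2$ is essential; without it, a single vertex flip could simultaneously repair several monochromatic edges, and the bound would degrade to a vertex-cover quantity rather than an edge count. Once the matching property is invoked, the two directions match exactly, yielding the tight equivalence between $2$-colouring cost of $\mathcal{G}$ and the number of uncut edges of $H$.
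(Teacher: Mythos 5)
Your proof is correct and follows essentially the same route as the paper's: identify each $H$-vertex's colour with the common colour of its $a$-vertices in $G_1$, flip exactly one endpoint per monochromatic matching edge for the $(\Leftarrow)$ direction, and read the partition off $\psi_1$ for the converse. If anything, your $(\Rightarrow)$ direction is slightly more careful than the paper's, since you only claim the cost upper-bounds the number of monochromatic edges (via disjoint witness vertices guaranteed by the matching structure of $G_2$) rather than asserting exact equality, which need not hold for a colouring that wastes recolourings.
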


\begin{proof}
($\Leftarrow$) Let $(A,B)$ be a partition of $V$ into two colour classes such that there are at most $k$ monochromatic edges in the colouring induced by $(A, B)$ on $H$, i.e, $\vert E \cap (A \times A)\vert + \vert E \cap (B \times B)\vert \leq k$. In the first snapshot $G_{1}$, for every $i=1,\ldots,n$, we assign colours to the vertices of the variable gadget as follows: if $v_i\in A$, we assign colour 1 to vertices $a_{i}^1, \ldots, a_{i}^{d_i}$ and colour 0 to vertices $b_{i}^1, \ldots, b_{i}^{d_i-1}$. 
If $v_i\in B$, we assign colour 0 to vertices $a_{i}^1, \ldots, a_{i}^{d_i}$ and colour 1 to vertices $b_{i}^1, \ldots, b_{i}^{d_i-1}$. 

Consider now an arbitrary edge $(v_i, v_j)$ of $H$, which is the $p^{\text{th}}$ edge of $v_i$ and the $q^{\text{th}}$ edge of $v_j$. If $v_i$ and $v_j$ belong to different colour classes in $H$ (i.e., if either $v_i\in A$ and $v_j\in B$, or $v_i\in B$ and $v_j\in A$), then the edge $(v_i, v_j)$ is not monochromatic in $H$, and the edge $(a_i^p, a_j^q)$ is also properly coloured in $G_2$. 
Otherwise, if both $v_i$ and $v_j$ belong to the same colour class of $H$ (i.e., if either $v_i,v_j \in A$ or $v_i,v_j \in B$) then the edge $(v_i, v_j)$ is monochromatic in $H$; in this case, we flip the colour of vertex $a_i^p$ in $G_2$, such that the edge $(a_i^p, a_j^q)$ becomes properly coloured in $G_2$. 

Therefore, if with the vertex partition $(A,B)$ we have $t\leq k$ monochromatic edges in $H$, then in the transition from $G_1$ to $G_2$ we recolour exactly $t$ vertices. Therefore, we constructed a proper temporal sequence colouring of $\mathcal{G}$ with a palette $C=\{0,1\}$ and cost at most $k$.

($\Rightarrow$) Let $\Psi$ be a minimum cost temporal sequence colouring of $\mathcal{G}$ with a palette $C=\{0,1\}$ and cost at most $k$. By the construction of $\mathcal{G}$, for every $i=1,\ldots,n$, all vertices $a_{i}^1, \ldots, a_{i}^{d_i}$ are coloured with the same colour in $G_{1}$. 
We define from the temporal sequence colouring $\Psi$ a vertex partition $(A,B)$ of the input graph $H$ as follows. For every $i=1,\ldots,n$, we add $v_i$ to $A$ (resp.~to $B$) if the vertices $a_{i,1}^1, \ldots, a_{i,1}^{d_i}$ are coloured~1 (resp.~0) in~$G_{1}$.

Consider an arbitrary edge $a_i^p a_j^q$ of the second snapshot $G_2$. By the construction of $\mathcal{G}$, this edge corresponds to the edge $(v_i, v_j)$ of $H$ which is the $p^{\text{th}}$ edge of $v_i$ and the $q^{\text{th}}$ edge of $v_j$. 
First, suppose that, in the transition from $G_1$ to $G_2$ in $\mathcal{G}$, the colouring $\Psi$ recolours neither $a_i^p$ nor $a_j^q$. Then, by the construction of the vertex partition $(A,B)$ of $H$, the edge $(v_i, v_j)$ is not monochromatic in the partition $(A,B)$. That is, either $v_i\in A$ and $v_j\in B$, or $v_i\in B$ and $v_j\in A$. 
Now suppose that, in the transition from $G_1$ to $G_2$ in $\mathcal{G}$, the colouring $\Psi$ recolours one of the vertices $a_i^p$ or $a_j^q$. Then, by the construction of the vertex partition $(A,B)$ of $H$, the edge $(v_i, v_j)$ is monochromatic in the partition $(A,B)$. That is, either $v_i,v_j\in A$ or $v_i,v_j\in B$. 

Therefore, if the minimum cost temporal sequence colouring $\Psi$ of $\mathcal{G}$ has cost $t\leq k$, then in the vertex partition $(A,B)$ of $H$ we have exactly $t\leq k$ monochromatic edges. This completes the proof.

\vspace{-12.1pt}
\qed\end{proof}

\section{$O\left(\sqrt{\log(nT)}\right)$-Approximation}

We now present our first positive algorithmic contribution in the form of an algorithm to find an $O\left(\sqrt{\log(nT)}\right)$ approximation of the minimum budget needed to maintain a bipartite colouring. Recall that we can determine if there exists a zero-cost colouring in $O\left(\sum_{t \in [T]} \vert E_t \vert\right)$ time by checking if the underlying graph is bipartite. Thus, our algorithm is a ``true'' approximation in that it returns a non-optimal budget if and only if there is no solution where the budget is zero.\looseness=-1

Our approximation itself stems from the $O\left(\sqrt{\log(n)}\right)$-approximation of the \MUC~problem due to Agarwal et al. \cite{agarwal2005log}, computable in $\tilde{O}(n^{3})$ time due to Arora et al \cite{arora2007combinatorial}, where $\tilde{O}(n^{3}) = O(n^{3} \log^c n)$ for some constant $c$.\looseness=-1


Our approach is to formulate the problem of finding a minimum budget temporal sequence $2$-colouring of a temporal graph $\mathcal{G}$ as a \MUC~instance, allowing this approximation algorithm to be applied directly.

\paragraph*{Formulation.}

The high-level idea behind our formulation is to construct an auxiliary static graph $\alpha(G) = (V_{\alpha}, E_{\alpha})$ such that the number of monochromatic edges in a partition of $V_\alpha$ corresponds to at most twice the cost of some temporal sequence colouring. The vertex set $V_\alpha$ contains a new vertex for each vertex in $V = \{v_1, v_2, \dots, v_n\}$ and snapshot $G_1, G_2, \dots, G_T$, formally $V_\alpha = \{v_{i, t} \mid i \in [n], t \in [T]\}$, where $v_{i, t}$ represents the vertex $v_i$ in snapshot $G_t$. We define our edge set $E_\alpha$ by two subsets, $C$ and $S$. The set $C$ connects vertices between snapshots, which we use to represent the cost of changing the colour of a given vertex. Formally, $C = \{(v_{i, t}, v_{i,t + 1}) \mid \forall v_i \in V, t \in [T - 1]\}$.

The set $S$ represents the components in the temporal graph within the new static graph. Given a component $K$ containing the set of vertices $V_K = \{ v_1, v_2, \dots, v_m \}$ in $G_t$ and 2-colouring $\psi$ of $K$, we add the edge $(v_{i, t}, v_{j, t})$ to $S$ for every pair $v_i, v_j \in V_K$ where $\psi(v_i) \neq \psi(v_j)$. This way, if $(v_{i, t}, v_{j, t}) \in S$, then $\psi_t$ is a proper colouring of $G_t$ only if $\psi_t(v_i) \neq \psi_t(v_j)$.
%

\begin{lemma}
    \label{lem:colouring_to_auxilary_graph}
    Let $\Psi$ be a temporal sequence $2$-colouring of the temporal graph $\mathcal{G} = (G_1, G_2, \dots, G_T)$. Then, there exists a $2$-colouring of $\alpha(\mathcal{G})$ with $\Cost(\Psi)$ monochromatic edges.
\end{lemma}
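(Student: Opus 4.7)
\bigskip

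\noindent\textbf{Proof proposal.} The plan is to build a $2$-colouring $\chi$ of $\alpha(\mathcal{G})$ directly from $\Psi$, chosen so that the $C$-edges (the ``cost edges'' between snapshots) contribute exactly one monochromatic edge per colour change of $\Psi$, while the $S$-edges (the intra-snapshot edges) contribute zero. The naive choice $\chi(v_{i,t})=\psi_t(v_i)$ would make a $C$-edge monochromatic precisely when the vertex's colour does \emph{not} change, which is the opposite of what we want. To correct this, I would apply a time-parity flip and define
\[
\chi(v_{i,t}) \;=\; \psi_t(v_i) \oplus (t \bmod 2),
\]
where $\oplus$ denotes addition mod $2$.

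Next I would verify the two edge classes separately. For a $C$-edge $(v_{i,t},v_{i,t+1})$, the two endpoints receive the same colour under $\chi$ iff $\psi_t(v_i)\oplus t = \psi_{t+1}(v_i)\oplus (t+1)$, equivalently iff $\psi_t(v_i)\neq \psi_{t+1}(v_i)$; thus a $C$-edge is monochromatic exactly when $\Psi$ recolours $v_i$ at the transition from $G_t$ to $G_{t+1}$. Summing over all $v_i$ and $t\in[T-1]$ gives exactly $\Cost(\Psi)$ monochromatic $C$-edges.

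For an $S$-edge $(v_{i,t},v_{j,t})$, by the construction of $S$ the vertices $v_i,v_j$ lie in the same component $K$ of $G_t$ and are on opposite sides of $K$'s bipartition (since $S$ was populated from a proper $2$-colouring of $K$, and connected bipartite components have a unique bipartition up to swapping labels). Because $\psi_t$ is itself a proper $2$-colouring of $G_t$ and $K$ is connected, $\psi_t$ must also put $v_i,v_j$ on opposite sides, so $\psi_t(v_i)\neq \psi_t(v_j)$. Since both endpoints share the same time parity offset $t\bmod 2$, this inequality is preserved under $\chi$, and hence no $S$-edge is monochromatic.

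Combining the two counts, the total number of monochromatic edges of $\chi$ in $\alpha(\mathcal{G})$ is exactly $\Cost(\Psi)$, which proves the lemma. The only mild subtlety to flag is the one already handled above: getting the $C$-edge count to equal the number of \emph{changes} (rather than the number of non-changes) requires the parity twist; this is the only place where the proof could go wrong if one attempts the direct identification $\chi(v_{i,t})=\psi_t(v_i)$.
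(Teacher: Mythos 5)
Your proof is correct and matches the paper's argument essentially step for step: the paper also defines the partition via $(\psi_t(v_i)+t)\bmod 2$ (your time-parity flip), then checks that $S$-edges are never monochromatic because $\psi_t$ properly $2$-colours each connected component, and that a $C$-edge is monochromatic exactly when the vertex is recoloured. No differences worth noting.
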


\begin{proof}
    We define our 2-colouring on $\alpha(\mathcal{G})$ by a pair of sets $A$ and $B$, constructed as follows. Given a vertex $v_i$ and snapshot $G_t$, we assign $v_{i, t}$ to the set $A$ if $(\psi_t(v_i) + t) \bmod 2 \equiv 0$, and to set $B$ otherwise. We use this construction so that if $\psi_{t}(v_i) = \psi_{t + 1}(v_{i})$, then either $v_{i, t} \in A$ and $v_{i, t + 1} \in B$, or $v_{i, t} \in B$ and $v_{i, t + 1} \in A$. Otherwise, if the colour of $v_i$ changes between snapshots, we have $v_{i, t}, v_{i, t + 1} \in A$ or $v_{i, t}, v_{i, t + 1} \in B$.
    
    Observe that, as $\psi_t$ is a proper colouring of $G_t$, we have that, for any edge $(v_i, v_j) \in E_t$, $\psi_t(v_i) \neq \psi_t(v_j)$, thus either $v_{i, t} \in A$ and $v_{j, t} \in B$, or $v_{i, t} \in B$ and $v_{j, t} \in A$. Similarly, if $(v_{i, t}, v_{j, t}) \in E_{\alpha}$ and $(v_i, v_j) \notin E_t$, then $v_i$ and $v_j$ are in the same component in $G_t$, with the condition that they must belong to different colour classes of any bipartite colouring of the component. Hence, $(v_i, v_j)$ can not form a monochromatic edge in $V_{\alpha}$ without contradicting the assumption that $\psi_t$ is a proper colouring of $G_t$.
    
    Therefore, any monochromatic edge in the colouring on $\alpha(\mathcal{G})$ induced by the partition $A$ and $B$ must be of the form $(v_{i, t}, v_{i, t + 1})$, for some $t \in [T - 1]$. Further, by construction, $(v_{i, t}, v_{i, t + 1})$ will be a monochromatic edge in $\alpha(\mathcal{G})$ if and only if $\psi_t(v_i) \neq \psi_{t + 1}(v_i)$. Thus, the total number of monochromatic edges in this partition is exactly equal to the cost of $\Psi$.

\vspace{-12.1pt}
\qed\end{proof}

\begin{lemma}
    \label{lem:auxilary_graph_to_bipartite}
    Let $\mathcal{G} = (G_1, G_2, \dots, G_T)$ be a temporal graph and let $A, B$ be a partition of $V_\alpha$ inducing at most $C$ monochromatic edges in $\alpha(\mathcal{G})$. Then, there exists a temporal sequence colouring $\Psi$ of $\mathcal{G}$ with a cost of at most $2C$.
\end{lemma}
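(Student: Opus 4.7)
The plan is to invert the map used in Lemma~\ref{lem:colouring_to_auxilary_graph} and then repair its output component by component. From the partition $(A, B)$ of $V_\alpha$, define a tentative assignment $\pi : V \times [T] \to \{0,1\}$ by $\pi(v_i, t) = t \bmod 2$ if $v_{i,t} \in A$, and $\pi(v_i, t) = (t+1) \bmod 2$ otherwise. A direct parity check shows that $\pi(v_i, t) \neq \pi(v_i, t+1)$ if and only if $v_{i,t}$ and $v_{i,t+1}$ lie on the same side of $(A,B)$, and hence the number of vertex-time pairs at which $\pi$ changes value between consecutive snapshots equals exactly the number $m_C$ of monochromatic edges in $C$.

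The assignment $\pi$ need not induce a proper colouring of each $G_t$: its failures of properness correspond precisely to the monochromatic edges in $S$. I will repair $\pi$ locally. For each snapshot $G_t$ and each component $K$ of $G_t$ with bipartition $(X, Y)$, set $\psi_t|_K$ to be whichever of the two proper $2$-colourings of $K$ agrees with $\pi|_{V_K}$ on the larger set of vertices; since every edge of $G_t$ lies inside some component, the resulting $\psi_t$ is a proper colouring of $G_t$. Writing $x_c = |\{v \in X : \pi(v,t) = c\}|$ and $y_c = |\{v \in Y : \pi(v,t) = c\}|$ for $c \in \{0,1\}$, a direct count shows that the number of monochromatic $S$-edges contained in $K$ at time $t$ is exactly $x_0 y_0 + x_1 y_1$, while the number of vertices at which $\psi_t|_K$ differs from $\pi|_{V_K}$ is $\min(x_0 + y_1,\, x_1 + y_0)$.

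The technical crux is the per-component inequality
\begin{equation*}
\min(x_0 + y_1,\, x_1 + y_0) \le x_0 y_0 + x_1 y_1,
\end{equation*}
which I plan to derive from the observation that $a b \ge (a+b)/2$ for positive integers $a, b$, applied to each product when both are strictly positive, together with a short direct check when one of $x_0 y_0$ or $x_1 y_1$ vanishes; the same case analysis handles the boundary situation in which the right-hand side is zero, where $\pi|_{V_K}$ is already proper (possibly after observing that one side of the bipartition is empty), so no flips are needed at all. Summing this bound over all components of all snapshots, the total number $F$ of vertex-time flips applied to $\pi$ in order to obtain $\Psi$ is at most the total number $m_S$ of monochromatic $S$-edges. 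Finally, a single flip of $\pi(v_i, t)$ can alter the per-vertex change indicator for at most two consecutive-snapshot pairs, namely those between $t-1, t$ and between $t, t+1$, and therefore changes the raw cost by at most $2$. Starting from $\Cost(\pi) = m_C$ and applying at most $m_S$ flips then yields $\Cost(\Psi) \le m_C + 2F \le m_C + 2 m_S \le 2(m_C + m_S) \le 2C$, as required. The main obstacle I anticipate is the per-component inequality together with its case analysis; once this is in place, the global cost bound follows by simple summation and the observation that local flips propagate to the global cost with multiplier at most two.
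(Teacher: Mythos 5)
Your proposal is correct and follows essentially the same route as the paper: the same parity-based inversion of the map from Lemma~\ref{lem:colouring_to_auxilary_graph}, the same per-component repair that flips the proper colouring's disagreement set of smaller size, and the same accounting of at most two units of cost per flipped vertex, yielding $m_C + 2m_S \le 2C$. The only difference is in how the key per-component bound $\min(x_0+y_1,\,x_1+y_0) \le x_0y_0 + x_1y_1$ is verified — you argue arithmetically via $ab \ge (a+b)/2$, while the paper argues structurally by exhibiting a witnessing monochromatic edge for each vertex of the smaller side — and both case analyses go through.
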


\begin{proof}
    We initially set $\psi_t(v_i)$ to $0$ if either $v_{i, t} \in A$ and $t \bmod 2 \equiv 0$, or $v_{i, t} \in B$ and $t \bmod 2 \equiv 1$. Otherwise, we set $\psi_t(v_i)$ to $1$. Note this matches the construction used in Lemma \ref{lem:colouring_to_auxilary_graph} to create a partition from the colouring, with the same property that a vertex changes colours between snapshots if and only if there is a corresponding monochromatic edge in the partition of the auxiliary graph. Therefore, given a monochromatic edge in the partition of $V_{\alpha}$ between some pair of vertices of the form $(v_{i, t}, v_{i, t + 1})$, $v_i$ changes colour in the corresponding colouring, and thus the colouring has an associated cost of $1$ for each such edge.

    Now, consider some component $K$ with the vertex set $V_K = \{v_1, v_2, \dots, v_m\}$ in snapshot $G_t$ such that $K$ is not properly coloured by our initial assignment. Further, let $k$ be the number of monochromatic edges between the vertices $\{v_{1, t}, v_{2, t}, \dots, v_{m, t}\}$. We partition $K$ into the sets $K_A$ and $K_B$ where $K_A = \{ v_i \mid \psi(v_1) = \psi(v_i)$ and $(v_{1, t}, v_{i, t} \notin E_{\alpha}) \} \cup \{ v_i \mid \psi(v_1) \neq \psi(v_i)$ and $(v_{1, t}, v_{i, t} \in E_{\alpha}) \}$, and $K_B = K \setminus K_A$. Observe that the set of vertices $K_A$ are properly coloured with respect to $v_{1, t}$, and thus each other, while the set of vertices are improperly coloured relative to $v_{1, t}$, and thus properly coloured with respect to each other.
    Therefore, to convert this into a proper colouring, we can ``flip'' the colours of the vertices in either $K_A$ or $K_B$, setting $\psi_t(v_i)$ to $0$ if it was originally $1$, or $1$ if it was originally $0$, for each $v_i$ in the flipped set.

    To determine the corresponding cost of this new colouring, we now show that $\min(\vert K_A\vert, \vert K_B\vert) \leq k$. First, assume that there exists a pair of vertices $v_{1, t}, v_{2, t} \in K_A$ such that $(v_{1, t}, v_{2, t}) \in E_\alpha$. Then, for every vertex $v_i \in K_B$, there must be at least one monochromatic edge to either $v_{1, t}$ or $v_{2, t}$ in the original partition of $V_{\alpha}$. Thus, in this case $\vert K_B \vert \leq k$. Otherwise, if no such pair exists, then $\{v_{j, t} \in K \mid (v_{1, t}, v_{j, t}) \in E_{\alpha}\} \subseteq K_B$ and, by extension, there exists some vertex $v_{i, t} \in K_B$ such that, $\forall v_{j, t} \in K_A$, $(v_{i, t}, v_{j, t})$ is monochromatic, and thus $\vert K_A \vert \leq k$, completing the claim.
    
    Note that for each vertex $v_i$ where the colour is changed when flipping the colours in either $K_A$ or $K_B$ we add at most two changes to $v_i$ in $\Psi$, corresponding to potentially changing $v_i$ between snapshots $t - 1$ and $t$, and $t$ and $t + 1$. Thus, recolouring this component adds a cost of at most $2k$ and therefore the final colouring has a cost of at most $2C$.

\vspace{-12.1pt}
\qed\end{proof}

\begin{theorem}
	\label{thm:approximating_budget}
	Given an always-bipartite temporal graph $\mathcal{G}$ with a lifetime $T$ and $n$ vertices we can find a temporal sequence $2$-colouring $\Psi = \psi_1, \psi_2, \dots, \psi_T$ such that $\Cost(\Psi)$ that is at most a factor of $O\left(\sqrt{\log{(nT)}}\right)$ greater than the minimum budget of any such colouring in $\tilde{O}((nT)^3)$ time.
\end{theorem}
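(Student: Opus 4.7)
The plan is to chain the two structural lemmas that have already been established, Lemma~\ref{lem:colouring_to_auxilary_graph} and Lemma~\ref{lem:auxilary_graph_to_bipartite}, together with the known \MUC\ approximation of Agarwal et al.\ \cite{agarwal2005log} and the combinatorial implementation of Arora et al.\ \cite{arora2007combinatorial}. All of the real content is in those lemmas; the theorem is essentially a bookkeeping argument that these pieces compose with only a constant blow-up in the approximation factor.

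First, I would construct the auxiliary graph $\alpha(\mathcal{G})$ explicitly from $\mathcal{G}$. The vertex set has size $nT$, and the edge set is built in two passes: for each $i$ and $t$ add the ``cost'' edge $(v_{i,t},v_{i,t+1})$, and for each snapshot $G_t$ identify its connected components and, for each component, add an edge $(v_{i,t},v_{j,t})$ for every pair of vertices lying in opposite sides of the (unique, up to swap) bipartition of that component. Computing bipartitions of each snapshot takes $O(\sum_t |E_t|)$ time, while writing down the $S$ edges costs at most $O(n^2 T)$ time in total. This is dominated by the subsequent approximation call, so the construction contributes nothing to the stated runtime.

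Next, I would invoke the Arora--Lee--Naor implementation of the Agarwal et al.\ algorithm on $\alpha(\mathcal{G})$, which in $\tilde{O}(|V_\alpha|^3) = \tilde{O}((nT)^3)$ time returns a partition $(A,B)$ of $V_\alpha$ whose number of monochromatic edges is within a factor $O(\sqrt{\log |V_\alpha|}) = O(\sqrt{\log(nT)})$ of the minimum over all partitions. Finally, I would feed $(A,B)$ into the constructive procedure from the proof of Lemma~\ref{lem:auxilary_graph_to_bipartite} to obtain a temporal sequence $2$-colouring $\Psi$ of $\mathcal{G}$.

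It remains to bound $\Cost(\Psi)$ against $\mathrm{OPT}$, the cost of a minimum temporal sequence $2$-colouring of $\mathcal{G}$. Let $\mathrm{OPT}_\alpha$ denote the minimum number of monochromatic edges over all $2$-partitions of $V_\alpha$. By Lemma~\ref{lem:colouring_to_auxilary_graph} applied to an optimal $\Psi^\ast$, there is a partition of $V_\alpha$ with exactly $\mathrm{OPT}$ monochromatic edges, hence $\mathrm{OPT}_\alpha \le \mathrm{OPT}$. The approximation step returns $(A,B)$ with at most $O(\sqrt{\log(nT)})\cdot \mathrm{OPT}_\alpha$ monochromatic edges, and Lemma~\ref{lem:auxilary_graph_to_bipartite} then produces a colouring whose cost is at most twice this quantity, giving $\Cost(\Psi) \le 2\cdot O(\sqrt{\log(nT)})\cdot \mathrm{OPT}_\alpha \le O(\sqrt{\log(nT)})\cdot \mathrm{OPT}$, as required. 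The only minor subtlety, and the step I would take most care over, is confirming that the factor-of-$2$ loss in Lemma~\ref{lem:auxilary_graph_to_bipartite} truly is worst-case and does not compound when multiple components per snapshot need flipping; inspecting that proof shows the $2k$ bound is paid locally per component against a disjoint set of monochromatic $\alpha$-edges, so summing across components and snapshots preserves the factor of $2$. Everything else is a direct composition.
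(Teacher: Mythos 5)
Your proposal is correct and takes essentially the same route as the paper's own proof: build $\alpha(\mathcal{G})$, run the Agarwal et al.\ approximation via the Arora et al.\ implementation in $\tilde{O}((nT)^3)$ time, and sandwich the cost using Lemma~\ref{lem:colouring_to_auxilary_graph} (giving $\mathrm{OPT}_\alpha \le \mathrm{OPT}$) and Lemma~\ref{lem:auxilary_graph_to_bipartite} (giving the factor-$2$ conversion back), so the constant is absorbed into the $O(\sqrt{\log(nT)})$ factor. Your added checks on the construction time of $\alpha(\mathcal{G})$ and on the per-component locality of the factor-$2$ loss are sensible but not points the paper dwells on.
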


\begin{proof}
	Let $\Psi'$ be some minimum cost temporal 2-colouring of $\mathcal{G}$ and observe that, per Lemmas \ref{lem:colouring_to_auxilary_graph} and \ref{lem:auxilary_graph_to_bipartite}, $2M \geq \Cost(\Psi') \geq M$, where $M$ minimum number of monochromatic edges in $\alpha(\mathcal{G})$. By the results of Agarwal et al. \cite{agarwal2005log}, and Arora et al. \cite{arora2007combinatorial}, we can find a partition of $\alpha(\mathcal{G})$ with $M'$ monochromatic edges, where $M'$ is a factor of at most $O\left(\sqrt{\log(nT)}\right)$ more than $M$. From Lemma \ref{lem:auxilary_graph_to_bipartite}, we can convert this assignment to a temporal 2-colouring $\Psi$ of $\mathcal{G}$ that requires a budget of at most $2 M'$. Thus, $ 2M' \geq \Cost(\Psi) \geq \Cost(\Psi') \geq M$, hence $\Cost(\Psi)$ is a factor of at most $O\left(\sqrt{\log(nT)}\right)$ greater than $\Cost(\Psi')$, giving the bound on the approximation factor. To get the time complexity, observe that we can find, by Arora et al. \cite{arora2005fast}, a soloution to any \MUC instance in $\tilde{O}(m^3)$, where $m$ is the number of vertices in the input instance, in our case $n T$, giving a time complexity of $\tilde{O}((nT)^3)$.

\vspace{-12.1pt}
\qed\end{proof}

\section{Parameterised Algorithm for Bipartite Colourings}

Finally, we provide an algorithm for finding the optimal $2$-temporal sequence colouring parameterised by the maximum number $k$ of connected components in any given snapshot.
Our algorithm works in a dynamic manner, computing the $2^k \times T$ sized table, $B$, indexed by the set of potential $2$-colourings of $\mathcal{G}$ and the set of timestamps, $[T]$. Formally, let $\mathcal{G} = G_1, G_2, \dots, G_T$ be a temporal graph such that each snapshot contains at most $k$ connected components. Observe that, for a bipartite graph with at most $k$ components, $\phi_C(G) \leq 2^k$, corresponding to the two unique ways of colouring each component. We use the following observation to aid in the construction of the table $B$ and for determining the time complexity.

\begin{observation}
    \label{obs:getting_all_colourings}
    The set of all bipartite colourings of a graph $G = (V, E)$, $\phi_C(G)$ with at most $k$ components can be output in $O(\vert E \vert 2^k)$ time.
\end{observation}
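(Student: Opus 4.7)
The plan is a direct two-phase algorithm: first preprocess $G$ to extract its connected components and a canonical bipartition of each, then enumerate all bipartite colourings by independently flipping these bipartitions.

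For the preprocessing, I would run a BFS (or DFS) on $G$ in $O(|V|+|E|)$ time to identify the components $K_1, \ldots, K_{k'}$ with $k' \leq k$ and, for each component, compute a reference bipartition from the parities of the BFS layers, giving a function $\chi_i : V(K_i) \to \{0,1\}$. If at any point an edge joins two vertices of the same parity within a component, then $G$ is not bipartite and $\phi_2(G) = \emptyset$, so the algorithm terminates immediately; otherwise each $\chi_i$ is a proper 2-colouring of $K_i$, and therefore of $G$ as a whole.

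The enumeration phase rests on the observation that any proper 2-colouring of $G$ is uniquely specified by choosing, for each component $K_i$, whether to keep $\chi_i$ as computed or to swap its two colour classes. This yields a bijection between $\phi_2(G)$ and $\{0,1\}^{k'}$, so $|\phi_2(G)| = 2^{k'} \leq 2^k$. I would iterate over all $b = (b_1, \ldots, b_{k'}) \in \{0,1\}^{k'}$ and, for each, emit the colouring $\psi_b(v) = \chi_i(v) \oplus b_i$ for $v \in V(K_i)$ via a single pass through $V$ in $O(|V|)$ time. Summing gives $O(|V|+|E|) + O(|V| \cdot 2^k) = O(|E| \cdot 2^k)$ under the standard convention that $|V| = O(|E|)$ (isolated vertices contribute at most a trivial factor and can be handled separately).

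There is essentially no substantive obstacle here; the only point needing care is reconciling the $O(|V|)$ cost per colouring with the stated $O(|E|)$ factor, which is just the convention above. The whole argument is a textbook application of the fact that the bipartite colourings of a connected bipartite graph come in exactly two variants, extended componentwise.
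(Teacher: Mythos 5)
Your proof is correct and matches the approach the paper implicitly relies on (the paper states this as an observation without proof, but the surrounding text appeals to exactly the same fact that each component admits two colourings, enumerated independently). The only caveat, which you already flag, is the benign $|V|$ versus $|E|$ bookkeeping for isolated vertices.
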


\paragraph*{Algorithm.}
Let $B$ be a table of size $2^k \times T$ such that $B[\psi, t]$ is the minimum budget needed to have a graph with the colouring $\psi$ in snapshot $G_t$, for any $\psi \in \phi_C(G_t)$ and $t \in [T]$.
As a base case, we set $B[\psi, 1]$ to $0$, for every $\psi \in \phi(G_1)$.
For $t \in [2, T]$, the value of $B[\psi, t]$ is determined from the set of colourings of $G_{t - 1}$ by looking for the colouring $\psi'$ of $\mathcal{G}$ minimising $B[\psi', t- 1] + \Cost((\psi', \psi))$, formally:

$$B[\psi, t] = \min_{\psi' \in \phi_C(G_{t - 1})} B[\psi', t - 1] + \Cost(\psi', \psi).$$

We provide pseudocode for computing the table $B$ in Algorithm \ref{alg:bipartite_colour}, and for converting the table $B$ into a temporal sequence colouring in Algorithm \ref{alg:get_colour}.

\begin{lemma}
    \label{lem:table_B}
    The value of $B[\psi, t]$ can be computed in $O(n2^k)$ time, assuming the value of $B[\psi', t -1], \forall \psi' \in \phi_C(G_{t - 1})$ has been precomputed.
\end{lemma}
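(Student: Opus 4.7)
The plan is to compute $B[\psi,t]$ directly from its defining recurrence. By Observation~\ref{obs:getting_all_colourings}, since $G_{t-1}$ has at most $k$ connected components and is bipartite, $|\phi_C(G_{t-1})| \leq 2^k$, so the minimum is taken over at most $2^k$ candidate colourings $\psi'$. By assumption, the value $B[\psi',t-1]$ is available in $O(1)$ time for each such $\psi'$, so the only remaining work is to evaluate $\Cost(\psi',\psi)$ for each candidate.

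Next I would argue that $\Cost(\psi',\psi)$ can be evaluated in $O(n)$ time by a single sweep through $V$: we initialise a counter, and for each of the $n$ vertices $v$ we compare $\psi'(v)$ to $\psi(v)$ and increment the counter whenever they disagree. This assumes that each colouring is stored as an explicit array of length $n$, which we may assume either from how Observation~\ref{obs:getting_all_colourings} outputs colourings or as part of the bookkeeping maintained alongside $B$.

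Combining these two observations, the algorithm iterates over the at most $2^k$ colourings $\psi' \in \phi_C(G_{t-1})$, computes $B[\psi',t-1] + \Cost(\psi',\psi)$ in $O(n)$ time per iteration, and keeps the running minimum in $O(1)$ additional work per iteration. The total running time is therefore
\[
O(2^k) \cdot O(n) = O(n 2^k),
\]
as claimed. The only non-routine point to verify is that comparing two $n$-vertex colourings truly dominates the work per iteration (the table lookup and the minimum update are $O(1)$), and that the set $\phi_C(G_{t-1})$ is indeed of size at most $2^k$; both follow immediately from the assumptions and from Observation~\ref{obs:getting_all_colourings}, so there is no substantive obstacle.
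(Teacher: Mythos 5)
Your proof is correct and follows essentially the same approach as the paper: iterate over the at most $2^k$ colourings $\psi'$ of $G_{t-1}$, spend $O(n)$ per candidate to evaluate $\Cost(\psi',\psi)$, and take the running minimum, giving $O(n2^k)$ overall. Your version is, if anything, slightly more explicit than the paper's about how the $O(n)$ cost comparison is carried out.
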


\begin{proof}
    Observe that the minimum cost of any temporal sequence colouring $\Psi = \psi_1, \psi_2, \dots, \psi_t$ such that $\psi_t = \psi$ for the given colouring $\psi_t$ can be determined by finding the colouring $\psi'$ of $G_{t - 1}$ such that there exists a temporal colour $\Psi' = \psi_1', \psi_2', \dots, \psi_{t - 1}'$ such that $\psi_{t - 1}' = \psi'$ and $B[\psi', t- 1] + \Cost(\psi', \psi)$ is minimal amongst all colourings of $G_{t - 1}$. As there are at most $2^k$ colourings of $G_{t - 1}$, and we can compute $\Cost(\psi', \psi)$ in $O(n)$ time, we can find $\min_{\psi' \in \phi_C(G_{t - 1})} B[\psi', t - 1] + \Cost(\psi', \psi)$ from $B$ in $O(n 2^k)$, we get the claim.
    
\vspace{-12.1pt}
\qed\end{proof}

\begin{algorithm}
    \caption{Bipartite Colouring Cost Table Algorithm}
    \label{alg:bipartite_colour}
    \begin{algorithmic}
        \Procedure{ColourCostTable}{$\mathcal{G} = (G_1, G_2, \dots, G_T)$}
            \State $k \gets \max_{t \in T} \vert \Components(G_t)\vert$
            \State $B \gets$ Empty $2^k \times T$ Table
            \State $\Psi \gets $ Empty Temporal Sequence Colouring
            \State $B[1, \psi] \gets 0$, $\forall \psi \in \phi_C(G_1)$
            \For{$t \in [2, T]$}
                \For{$\psi \in \phi_C(G_t)$} \Comment{Iterate over the set of all colourings of the graph $G_t$.}
                    \State $B[\psi, t] \gets \infty$ \Comment{Add a place holder value for $B[\psi, t]$.}
                    \For{$\psi' \in \mathbf\Psi(G_{t - 1})$} \Comment{Iterate over the set of all colourings of $G_{t - 1}$.}
                        \If{$B[\psi', t - 1] + \Cost(\psi, \psi') < B[\psi, t]$}
                            \State $B[\psi, t] \gets B[\psi', t - 1] + \Cost(\psi, \psi') $
                        \EndIf
                    \EndFor
                \EndFor
            \EndFor
            \State \textbf{return} $B$
        \EndProcedure
    \end{algorithmic}
\end{algorithm}




\begin{theorem}
    \label{thm:fpt}
    The minimum cost of any temporal sequence $2$-colouring on the temporal graph $\mathcal{G} = G_1, G_2, \dots, G_T$ where no snapshot contains more than $k$ components can be computed in $O(T \vert E \vert 2^k + n T 2^{2k})$ time.
\end{theorem}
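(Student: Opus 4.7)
The plan is to verify correctness of Algorithm \ref{alg:bipartite_colour} by induction on $t$, and then bound its running time by summing the work spent at each of the three stages: enumerating all valid $2$-colourings of each snapshot, filling in each entry of the table $B$, and extracting the minimum from the last column.

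For correctness, I would argue by induction on $t$ that $B[\psi, t]$ stores exactly the minimum cost of any temporal sequence $2$-colouring of the prefix $G_1,\dots,G_t$ whose colouring at time $t$ is $\psi$. The base case $t=1$ is immediate since no transitions have occurred, so the cost is $0$. For the inductive step, any temporal sequence colouring ending with $\psi$ at time $t$ decomposes as a temporal sequence colouring on $G_1,\dots,G_{t-1}$ ending with some $\psi' \in \phi_C(G_{t-1})$, plus the transition cost $\Cost(\psi',\psi)$; minimising over $\psi'$ exactly matches the recurrence used by the algorithm, which is precisely the recurrence analysed in Lemma~\ref{lem:table_B}.

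For the running time, I would account for each contribution separately. By Observation~\ref{obs:getting_all_colourings}, enumerating $\phi_C(G_t)$ takes $O(|E_t|\, 2^k)$ per snapshot, for a total of $O(T |E|\, 2^k)$ across the lifetime (where $|E|$ denotes the maximum number of edges in any single snapshot). By Lemma~\ref{lem:table_B}, each entry $B[\psi,t]$ with $t \geq 2$ is computed in $O(n 2^k)$ time; since there are at most $2^k$ entries per snapshot and $T-1$ non-base snapshots, this contributes $O(n T 2^{2k})$. Finally, the overall minimum cost is $\min_{\psi \in \phi_C(G_T)} B[\psi, T]$, which is extracted in an additional $O(2^k)$ time. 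Summing these terms yields the claimed bound $O(T |E|\, 2^k + n T 2^{2k})$.

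There is no real obstacle here beyond careful bookkeeping: the algorithm is a straightforward dynamic program whose correctness follows from the recurrence, and the time bound is simply the product of the three quantities already made explicit by Observation~\ref{obs:getting_all_colourings} and Lemma~\ref{lem:table_B}. The only point worth double-checking is that the algorithm as stated computes only the minimum cost (not an explicit optimal sequence), so no extra factor from backtracking or pointer storage enters the analysis.
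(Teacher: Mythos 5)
Your proposal is correct and follows essentially the same route as the paper: the paper likewise combines Lemma~\ref{lem:table_B} for the $O(n2^k)$ per-entry cost, the $O(T\vert E\vert 2^k)$ enumeration of colourings, the $O(T2^k)$ count of table entries, and a final $O(2^k)$ minimisation over $B[\cdot,T]$. Your explicit induction for correctness is slightly more detailed than the paper's (which leans entirely on Lemma~\ref{lem:table_B}), but it is the same argument.
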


\begin{proof}
From Lemma \ref{lem:table_B} we can compute each entry in $B$ in $O(n2^k)$ time, provided we only compute $B[\psi, t]$ once $B[\psi', t- 1]$ has been computed for all $\psi' \in \phi_C(G_{t - 1})$. As there are $O(T 2^k)$ entries in $B$, and we can output the set of all entries in $O(T\vert E \vert 2^k)$, we get a total complexity of $O(T\vert E \vert 2^k + n T 2^{2k})$ for computing the full table $B$. Once the table $B$ has been computed, we can determine the minimum cost colouring of $\mathcal{G}$ in $O(2^k)$ by finding $\min_{\psi \in \phi_C(G_T)} B[\psi, T]$, concluding the proof.

\vspace{-12.1pt}\qed\end{proof}

Note that Theorem \ref{thm:fpt} thus gives an $O(T \vert E \vert)$ time algorithm when the number of components is constant in each round, most notably when the graph is always connected. Finally, we make a brief observation on the complexity of determining a minimum-cost temporal 2-colouring from the table $B$.

\begin{proposition}
    \label{prop:getting_the_colouring_from_b}
    We can determine a temporal 2-colouring of the temporal graph $\mathcal{G} = G_1, G_2, \dots, G_T$ in $O(
T\vert E \vert 2^k + n T 2^{2k})$
\end{proposition}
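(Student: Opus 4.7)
The plan is to augment the cost-table computation from \Cref{thm:fpt} with a standard backtracking pass, matching the outline of the (referenced) Algorithm \ref{alg:get_colour}. First I would run Algorithm \ref{alg:bipartite_colour} to produce the table $B$; by \Cref{thm:fpt} this already fits within the claimed time bound $O(T|E| 2^k + nT 2^{2k})$. The remaining task is to exhibit a witnessing colouring sequence whose total cost equals $\min_{\psi \in \phi_C(G_T)} B[\psi,T]$ without exceeding this budget.

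To do this I would select $\psi_T \in \arg\min_{\psi \in \phi_C(G_T)} B[\psi, T]$, and then iterate $t$ from $T$ down to $2$, at each step choosing $\psi_{t-1} \in \phi_C(G_{t-1})$ to be any colouring satisfying
\[
B[\psi_{t-1}, t-1] + \Cost(\psi_{t-1}, \psi_t) = B[\psi_t, t].
\]
Such a predecessor necessarily exists: by the recurrence used to populate $B[\psi_t, t]$ in Algorithm \ref{alg:bipartite_colour}, the value stored there is exactly the minimum of $B[\psi', t-1] + \Cost(\psi', \psi_t)$ over $\psi' \in \phi_C(G_{t-1})$, and the argmin provides the desired $\psi_{t-1}$. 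A straightforward induction on $T - t$ then shows that the constructed sequence $\Psi = \psi_1, \ldots, \psi_T$ is a temporal sequence $2$-colouring of $\mathcal{G}$ with $\Cost(\Psi) = B[\psi_T, T]$, which equals the optimum by \Cref{lem:table_B} and \Cref{thm:fpt}.

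For the running time of the backtracking phase: at each of the $T-1$ steps we scan at most $2^k$ candidate predecessor colourings and compute $\Cost(\psi', \psi_t)$ in $O(n)$ time per pair, yielding $O(nT 2^k)$ overall. Adding this to the $O(T|E| 2^k + nT 2^{2k})$ cost of constructing $B$ gives the claimed total, since $nT 2^k \leq nT 2^{2k}$.

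There is essentially no hard step here: the routine is textbook DP traceback, and the only point that needs checking is the invariant that a valid predecessor $\psi_{t-1}$ matching the recurrence always exists, which is immediate from how Algorithm \ref{alg:bipartite_colour} fills the table. The only mild subtlety is ensuring that the traceback cost is dominated by the table-construction cost, which follows from the bound $nT 2^k = O(nT 2^{2k})$ noted above.
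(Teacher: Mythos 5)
Your proposal is correct and follows essentially the same approach as the paper's proof: select $\psi_T$ minimising $B[\psi_T,T]$, then backtrack by finding at each step a predecessor colouring satisfying $B[\psi,t]+\Cost(\psi,\psi_{t+1})=B[\psi_{t+1},t+1]$, at $O(n2^k)$ cost per step. Your write-up is in fact somewhat more careful than the paper's, explicitly justifying the existence of a valid predecessor and checking that the traceback cost is dominated by the table construction.
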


\begin{proof}
    Let $\Psi = \psi_1, \psi_2, \dots, \psi_T$ be the temporal sequence colouring, and let $B$ be the table as computed in Theorem \ref{thm:fpt}. We work from the colouring $\psi_T$ such that $B[\psi_T, T] = \min_{\psi' \in \phi_C}B[\psi', T]$. We determine the value of $\psi_t$, for every $t \in [T - 1]$, from the value of $\psi_{t + 1}$ by finding a colouring $\psi$ such that $B[\psi, t] + \Cost(\psi, \psi_{t + 1}) = B[\psi_{t + 1}, t + 1]$, which can be done in $O(n2^k)$ time in a brute force manner, giving the bound.
    
\vspace{-12.1pt}\qed\end{proof}

\begin{algorithm}
    \caption{Bipartite Colouring Algorithm}
    \label{alg:get_colour}
    \begin{algorithmic}
        \Procedure{Colouring}{$\mathcal{G} = (G_1, G_2, \dots, G_T)$}
            \State $B \gets$ \textsc{ColourCostTable}$(\mathcal{G})$
            \State $\Psi = (\psi_1, \psi_2, \dots,\psi_T) \gets (\emptyset, \emptyset, \dots, \emptyset)$
            \For{$\psi \in \mathcal{\Psi}(G_T)$}
                \If{$\psi_T = \emptyset$ or $B[\psi, T] < B[\psi_T, T]$}
                    \State $\psi_T \gets \psi$
                \EndIf
            \EndFor
            \For{$t \in T - 1, T - 2, \dots, 1$}
                \For{$\psi \in \mathcal{\Psi}(G_t)$}
                    \If{$C[\psi, t] + \Cost(\psi, \psi_{t + 1}) = B[\psi_{t + 1}, t + 1]$}
                        \State $\psi_t \gets \psi$
                        \State \textbf{break}
                    \EndIf
                \EndFor
            \EndFor
            \State \textbf{return} $\Psi$
        \EndProcedure
    \end{algorithmic}
\end{algorithm}

\section{Conclusion}

In this paper, we have introduced and studied the problem of maintaining a temporal sequence colouring, with a focus on two colourings of bipartite graphs. In doing so, we have provided a strong analysis of the complexity landscape of this problem, including hardness results, a fixed parameter tractability result for the number of components, an $O\left(\sqrt{\log(nT)}\right)$-factor approximation algorithm, and a hardness of constant factor approximation.

There are two natural directions in which to continue this work. The first is to extend the algorithmic results, most naturally the approximation result, to colourings with a larger palette size. It seems likely that a polynomial-time algorithm can be found with a similar approximation factor for maintaining degree + 1 colourings in general, where degree here refers to the maximum degree of any vertex in any snapshot. On one hand, as with bipartite graphs, these can be determined in linear time relative to the number of edges, avoiding the natural problems due to the hardness of the colouring problem in general. On the other hand, unlike with bipartite graphs, the same freedom may preclude the same techniques used here for our approximation results.

The second direction is to improve the approximation bound. While a constant factor approximation bound has been ruled out unless the unique games conjecture fails, it is still possible that an improvement can be found in general, or for some particular classes of graphs. In particular, as our construction requires the use of long paths, the question of approximation remains open for temporal graphs in which each timestep is a matching.

\bibliography{bib}

\end{document}